\newtheorem{theorem}{Theorem}
\newtheorem{lemma}[theorem]{Lemma}
\newcommand{\PP}{{\mathbb P}}
\newcommand{\EE}{{\mathbb E}}
\renewcommand{\section}[1]{%
\bigskip
\begin{center}
\begin{Large}
\normalfont\scshape #1
\medskip
\end{Large}
\end{center}}
\renewcommand{\subsection}[1]{%
\bigskip
\begin{center}
\begin{large}
\normalfont\itshape #1
\end{large}
\end{center}}
\renewcommand{\subsubsection}[1]{%
\vspace{2ex}
\noindent
\textit{#1.}---}
\renewcommand{\tableofcontents}{}
\begin{document}
\begin{flushright}
Version dated: \today
\end{flushright}
\bigskip
\noindent BRANCH LENGTHS ON TREES
% put in your own RH (running head)
% for POVs the RH is always POINT OF VIEW

\bigskip
\medskip
\begin{center}

% Insert your title:
\noindent{\Large \bf Branch Lengths on Birth-Death Trees and the Expected Loss of Phylogenetic Diversity}
\bigskip

% We don't use a special title page; the author information is entered 
% like any other text.

% FOOTNOTES: We don't allow them in the manuscript, except in
% tables. Don't include any footnotes in the text.

\noindent {\normalsize \sc Arne Mooers$^1$, Olivier Gascuel$^2$, Tanja Stadler$^3$, Heyang Li$^4$, and Mike Steel$^4$}\\
\noindent {\small \it 
$^1$IRMACS, Simon Fraser University, Burnaby, BC, Canada V5A 1S6;\\
$^2$M\'{e}thodes et Algorithmes pour la Bioinformatique, LIRMM, CNRS - Universit\'{e} de Montpellier, 34095 Montpellier, France;\\
$^3$ETH Z{\"u}rich, Institut f{\"u}r Integrative Biologie, UniversitŠtstrasse 16, 8092 Z{\"u}rich, Switzerland;\\
$^4$Allan Wilson Centre for Molecular Ecology and Evolution, Biomathematics Research Centre, University of Canterbury, Christchurch, 8140, New Zealand;}\\
\end{center}
\medskip
\noindent{\bf Corresponding author:} Arne Mooers, IRMACS, Simon Fraser University, 8888 University Drive, Burnaby, BC, Canada V5A 1S6; E-mail: amooers@sfu.ca.\\

% Of course the specific format of addresses may vary according to
% country or other factors. Also, that was just an example email format.
%It's acceptable to add email addresses for authors in addition to the
%corresponding author. These would be placed after "Country."

\vspace{1in}

\subsubsection{Abstract} 

Diversification is nested, and early models suggested this could lead to a great deal of evolutionary redundancy in the Tree of Life. This result is based on a particular set of branch lengths produced by the common coalescent, where pendant branches leading to tips can be very short compared to branches  deeper in the tree.  Here, we analyze alternative and more realistic Yule and birth-death models.   We show how censoring at the present both makes average branches one half what we might expect and makes pendant and interior branches roughly equal in length.  Although dependent on whether we condition on the size of the tree, its age, or both, these results hold both for the Yule model and for birth-death models with moderate extinction.  Importantly, the rough equivalency in interior and exterior branch lengths means the loss of evolutionary history with loss of species can be roughly linear.  Under these models, the Tree of Life may offer limited redundancy in the face of ongoing species loss. 

\noindent (Keywords: Phylogenetic tree, Yule process, extinction, phylogenetic diversity )\\

% Points of View do not have abstracts but they should include
% Keywords.

\vspace{1.5in}

In a well-cited paper,  \cite{nee97} state that ``80\% of the underlying tree of life can survive even when approximately 95\% of species are lost.'' This quote has percolated through the literature (see, e.g. \citep{erwin08, purvis08, roy09, santos10, vamosi08}). This high level of phylogenetic redundancy is due to Nee and May using coalescent-type models of tree shape, where pendant edges are expected to be much shorter than interior edges. Here, we test the robustness of this result by building on recent algebraic results from \cite{steel10} to derive the expected branch lengths on phylogenies produced under alternative Yule and birth-death models of diversification. We highlight three findings: (i) the average length of  branches in pure-birth (Yule) trees is roughly one half our naive expectation; (ii) the expected length of the interior branches and those leading to species are the same or nearly so, and this means that (iii) the relationship between the loss of species to extinction and the loss of phylogenetic diversity \citep{faith92} can be much more precipitous than that quoted above \citep{nee97}. All three findings hold for birth-death trees with low to moderate relative extinction rates.

For much of what follows, we will consider a pure-birth Yule tree with diversification rate $\lambda$. We note that inferred phylogenetic trees are often more imbalanced than Yule trees \citep{mooers97}, but currently, no biological model captures this empirical distribution. More importantly for what follows, the Yule process produces a distribution of splitting events on the tree from past to present that is intermediate between that expected under an adaptive radiation \citep{gavrilets05, rabosky08}, where splits are concentrated nearer the root,  that expected under long-term equilibrium models of diversification \citep{hey92, hubbell01}, where splits are concentrated nearer the present. Our main motivation for focusing on this model is that trees sampled from the literature tend to have splitting times concentrated nearer the root \citep{mcpeek08, morlon10}, making the Yule model a conservative model when measuring phylogenetic redundancy.

We refer to branches that lead to the tips of  a tree as pendant edges (with expected average length $p_n$, where $n$ is the number of tips) and branches found deeper within the tree as interior edges (with expected average length $i_n$). The term `expected average length' clarifies that two random processes are at work -- the production of a Yule tree and the selection of an edge from that tree. The expected phylogenetic diversity of such a tree is  the sum of the expected pendant and interior edge lengths, i.e. $L_n$ = $np_n$ +($n$-2)$i_n$. We will assume throughout that the tree starts as an initial bifurcation, such that at some time $t$ in the past it has two lineages each of length 0 (as in \cite{nee01}).  After time $t$ from the initial bifurcation, we produce a binary tree with $n$ tips (as in \cite{nee01, yang97}), and several properties of this process have been well-studied by these and other authors. In particular, the expected number of tips in the tree is $2e^{\lambda t}$.

 Given rate $\lambda$, the time that a given lineage persists until it splits on a Yule tree has an exponential distribution with a mean of $\frac{1}{\lambda}$.  This motivates our naive expectation that the expected average edge length on such a tree would also be $\frac{1}{\lambda}$.  We first present a simple proof that the expected average edge length in a Yule tree is actually $\frac{1}{2\lambda}$.  This provides an underlying intuition that is absent from the purely algebraic proof of \cite{steel10}. We then summarise and extend some  results from \cite{steel10}  to describe how the relative lengths of pendant and interior edges are affected by (i) conditioning on, (ii) estimating, or (iii) not knowing, three related quantities: $n$, the number of tips of the tree; $t$, the depth of the tree; and $\lambda$, the diversification rate. We then further extend our results to birth-death trees, and finally revisit the provocative question: at what rate do we lose phylogenetic diversity as we lose species on a tree?

\section{Expected length of a branch on a Yule tree sampled at the present}

Let us assume that we observe a Yule tree at the moment that it has grown to $n$ + 1 tips ($n$ =  4 in Fig.  1). We do not condition on its depth ($t$).  We can designate the edge that has just split as an interior edge, and disregard the two zero-length branches that have just arisen.  Doing so designates an equal number ($n-1$) of interior and pendant edges on this tree.  One might think of this Yule tree as one that has been `cut at' (or conditional on)  the observation of $n+1$ tips. Intuitively, even though the expected length of an edge on an uncensored tree would be $\frac{1}{\lambda}$, the designated pendant edges will be shorter due to this conditioning.  However,  interior branches are also affected by this censoring:  particularly long interior branches would stretch to the present, and so would be pendant edges.  This means that the expected lengths of interior edges are also shorter than $\frac{1}{\lambda}$.

\begin{theorem}
\label{gascuel}
In a Yule tree, at the latest speciation event, the expected length of a randomly drawn edge is $\frac{1}{2\lambda}$.
\end{theorem}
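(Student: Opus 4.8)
The plan is to reduce the statement to a total-branch-length computation. Write $\Lambda$ for the total length of all edges of the tree at the instant we observe it, i.e.\ at the $(n-1)$-st speciation event, when it has $n+1$ tips. A rooted binary tree grown from the initial bifurcation and observed with $n+1$ tips has $n+1$ pendant edges and $n-1$ interior edges; at the moment of the latest split, the edge that just split is relabelled interior, and the two daughter edges created by that split both have length $0$. Discarding those two zero-length edges, as in the theorem's setup, leaves exactly $2(n-1)$ edges, each almost surely of positive length, and the sum of their lengths is still $\Lambda$. Since $2(n-1)$ is deterministic, linearity of expectation gives that the expected length of a uniformly random one of these edges is $\EE[\Lambda]/(2(n-1))$, so the whole problem is to show $\EE[\Lambda] = (n-1)/\lambda$.

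Next I would compute $\EE[\Lambda]$ by the natural phase decomposition of the Yule process. Split the history of the tree into the successive phases during which it has $2, 3, \ldots, n$ extant lineages (there are $n-1$ such phases, the last one ending at the observation time). When there are $k$ lineages, each carries an independent $\mathrm{Exp}(\lambda)$ clock to its next split, so the phase lasts a time $T_k \sim \mathrm{Exp}(k\lambda)$ by the minimum-of-exponentials property, independently across phases by memorylessness/the strong Markov property; which lineage splits affects the topology but not $T_k$. During that phase the total branch length grows at rate $k$, contributing $k\,T_k$, with $\EE[k\,T_k] = k\cdot\frac{1}{k\lambda} = \frac{1}{\lambda}$, the same for every $k$. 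Summing over the $n-1$ phases yields $\EE[\Lambda] = \sum_{k=2}^{n} \frac{1}{\lambda} = \frac{n-1}{\lambda}$, and dividing by $2(n-1)$ gives $\frac{1}{2\lambda}$, independent of $n$ (and of $t$, which we never conditioned on).

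I expect the only real care to be in two bookkeeping points rather than in any hard analysis. First, one must remember to exclude the two brand-new zero-length pendant edges: keeping all $2n$ edges would instead give $\frac{n-1}{2n\lambda}$, so the ``censor at $n+1$ tips and drop the fresh tips'' convention is doing genuine work and must be stated cleanly. Second, one should justify carefully that the expected length accumulated in the $k$-lineage phase is exactly $1/\lambda$ regardless of which lineage splits and of the tree's shape so far — this is where the independence and exponential structure of the pure-birth process is used, and it is what makes the per-phase contributions telescope into the clean constant. Everything else is linearity of expectation and a one-line sum.
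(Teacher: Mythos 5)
Your proof is correct, but it takes a genuinely different route from the paper. You compute the expected total branch length directly: decomposing the history into the phases with $k=2,\ldots,n$ lineages, each phase lasting an $\mathrm{Exp}(k\lambda)$ time during which total length accrues at rate $k$, so each phase contributes $\frac{1}{\lambda}$ in expectation and $\EE[\Lambda]=\frac{n-1}{\lambda}$; dividing by the deterministic count $2(n-1)$ of non-zero edges gives $\frac{1}{2\lambda}$. The paper argues instead by ``uncensoring'': it lets each of the $n-1$ pendant edges keep growing until its own next split, notes by memorylessness that each added segment has mean $\frac{1}{\lambda}$... wait, mean $\frac{1}{\lambda}$ for the full extension, and that every edge of the uncensored tree is a complete $\mathrm{Exp}(\lambda)$ lifetime with mean $\frac{1}{\lambda}$, and then subtracts to get $\frac{1}{2}i_n+\frac{1}{2}p_n=\frac{1}{2\lambda}$. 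Your approach is more elementary and self-contained: it yields the expected total length $\frac{n-1}{\lambda}$ as a by-product, makes the independence from $n$ (and $t$) transparent through the per-phase telescoping, and avoids having to justify that a randomly drawn edge of the uncensored tree has expectation exactly $\frac{1}{\lambda}$. The paper's argument, on the other hand, delivers the intuition the authors emphasize in the surrounding text --- that censoring at the present shortens interior as well as pendant edges --- and its intermediate identity in terms of $i_n$ and $p_n$ sets up the next section, where Eqn.~(\ref{niceeq}) shows how the $\frac{1}{\lambda}$ is split evenly between interior and pendant edges. Your two bookkeeping cautions (dropping the two zero-length daughter edges, and the per-phase contribution being $\frac{1}{\lambda}$ regardless of which lineage splits) are exactly the right points to make explicit, and both are handled correctly.
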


{\em Proof:}   Consider the late sampling scenario described in the preceding paragraph, and let the $n - 1$ remaining pendant edges each grow under the Yule process until they also split, disregarding all the new infinitesimal edges that result.  Each of these grown pendant edges has an expected length $g_n$ and is made up of two segments - its expected length before the tree had $n+1$ edges (= $p_b$), and its expected length as it continued to grow after the tree had $n+1$ tips  (= $p_a$), such that $g_n$ = $p_b$ + $p_a$.  Importantly, given the memoryless nature of exponential processes, the length of any pendant edge segment observed from the time that $n$ + 1 tips are produced (the dashed lines in Fig. 1)  is drawn from one common exponential distribution, with the same parameter $\lambda$. Also, $p_n$ on the censored tree = $p_b$ on the uncensored tree.

Given an equal number of interior and pendant edges on this uncensored tree, we can write an expression for the expected length (call it $\EE[L])$ of any randomly drawn edge on this tree as:

\begin {equation}
\label{one}
\EE[L]  = \frac{1}{2}\cdot i_n +   \frac{1}{2}\cdot (p_b +p_a) = \frac{1}{2}\cdot i_n + \frac{1}{2}\cdot (p_b +\frac{1}{\lambda}).
\end {equation}

Any single lineage has $ \EE[L] = \frac{1}{\lambda}$, and so we can substitute this for $\EE[L]$ to obtain:

\begin {equation}
\label{two}
\frac{1}{2}\cdot i_n +   \frac{1}{2}\cdot p_b  = \frac{1}{2\lambda},
\end {equation}
because  $p_n$ = $p_b$. The left member in equation (\ref{two}) is the expected length of a randomly drawn edge in the censored Yule tree, which completes the proof.

\begin{figure}[ht]
\begin{center}
\resizebox{11cm}{!}{
\includegraphics{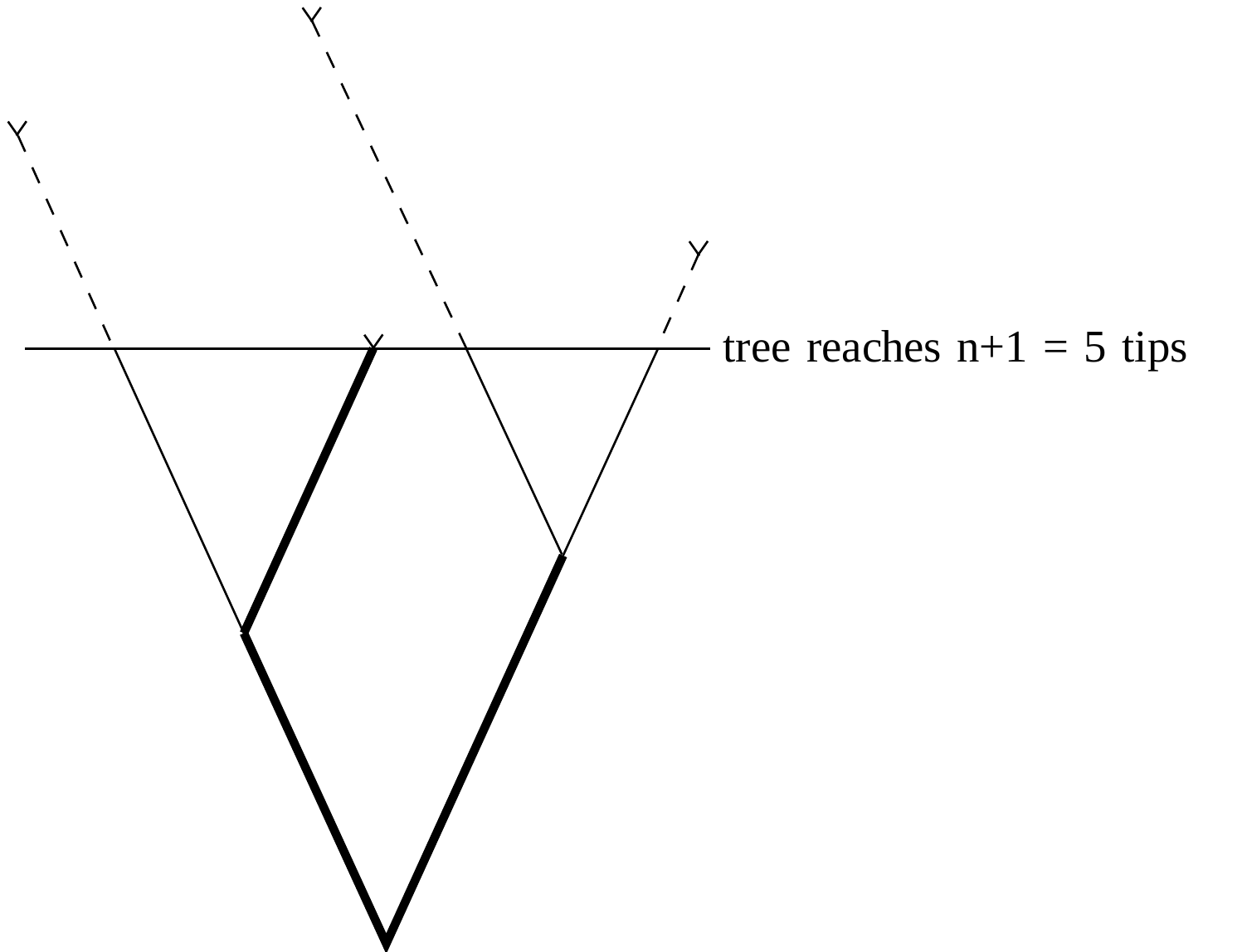}
}
\caption{Growing a Yule tree, to illustrate the proof of Theorem~\ref{gascuel}. The horizontal line is the observation time, when $n+1$ tips first appear. Below this line is the censored tree whose edge lengths we are modelling.  The uncensored tree has each pendant edge continuing to lengthen till it speciates in turn. The thick lines denote interior branches, the thin lines are pendant edges on the censored tree, and the dashed lines are the segments that accrue to produce the uncensored tree.
}
\label{fig1}
\end{center}
\end{figure}

This proof does not say anything about the relative lengths of internal vs. pendant edges per se - it might be that internal edges are still much longer than pendant ones on Yule trees that we observe at a single time slice, and it may be that the result hinges on observing the tree at exactly the moment that a speciation event occurs.  We turn to these issues now.

\section{Expected pendant vs. interior edge lengths as function (only) of $n$}

In the above construction of the Yule tree we made the convention that the edge that has just split is an interior edge of the resulting tree. However, we could have alternatively classified it as a pendant edge. In that case we have $n$ pendant edges and $n-2$ interior edges, and one can again consider the expected average pendant and interior branch lengths, which we will denote as $i'_n$ and $p'_n$.  Steel and Mooers (2010) used a recursive argument to
establish the following exact result: For all $n \geq 3$, we have:
\begin{equation}
\label{niceeq}
i'_n = p'_n = \frac{1}{2\lambda}.
\end{equation}

This result tells us exactly how $\frac{1}{\lambda}$ is shared out between the two terms in Theorem~\ref{gascuel}.  Due to the memoryless nature of the exponential distribution, the pendant edge that was chosen to split at our observation time is random with respect to its length, and so we can express the lengths of the interior and pendant edges on the censored tree as:
$$i_n = \frac{1}{n-1} ((n-2)i'_n + p_n'), \mbox{ and } p_n = \frac{1}{n-1}(np'_n - p'_n).$$
 Eqn. (\ref{niceeq}) implies that, for all $n \geq 3$:
$$i_n = i'_n = \frac{1}{2\lambda} \mbox{ and } p_n = p'_n = \frac{1}{2\lambda}.$$
In particular, the terms $i_n$ and $p_n$ in Theorem~\ref{gascuel} are equal.  We note that Theorem~\ref{gascuel} is for a late sampling scenario, when we show up just when $n$ + 1 tips first appear.  However, if we only condition on $n$, but show up at a random time between the interval when $n$ and $n$+1 tips exist (i.e. if we 'show up' at the present to sample our tree), any pendant edge has the same expected average length as in the late sampling scenario.  This result  is analogous to the bus-stop problem: if buses arrive at a certain rate $b$ under an exponential process, if one shows up at a random time, the expected time since the last bus  is $b^{-1}$ rather than something less than that.  This property was formally proven for model trees \citep{ger}  and also used recently by \cite{hartmann10} in the context of sampling trees from evolutionary models.

\section{Expected pendant vs. interior edge lengths as functions of $t$ (alone or with $n$)}
\label{onlytsec}
The expected number of tips in  a Yule tree at time $t$ is given by $N(t) = 2e^{\lambda t}$, since each of the two initial lineages has a geometrically distributed
distribution, with a mean of $e^{\lambda t}$ (see e.g. \cite{nee94}, or \cite{bei}  (Example 6.10, pp. 193)).    
We now introduce $P$ as the sum of all pendant edges, $I$ as the sum of all interior edges, and, as in the introduction, $L$ as the total tree length, $L$=$P$+$I$.  These quantities, conditional on either $n$ or $t$ or both, should be noted, as they will be useful for many of the proofs that follow.  If we let $P(t)$ and $I(t)$ denote, respectively, the expected sum of the lengths of the pendant and interior edges of a Yule tree grown for time $t$ and let $L(t) = P(t)+I(t)$, then,  from \cite{steel10}, we have the following equalities:

\begin{equation}
\label{tonly}
L(t) = \frac{2}{\lambda}(e^{\lambda t} - 1); P(t) = \frac{1}{\lambda}(e^{\lambda t} - e^{-\lambda t}) \mbox{ {\rm and} } I(t) = \frac{1}{\lambda}(e^{\lambda t} + e^{-\lambda t}-2).
\end{equation}

Thus the ratio of the expected average lengths of the pendant and interior edges of a Yule tree of depth $t$
 converges to $1$ exponentially fast with increasing $t$.  $P(t)$ is slightly larger than than $I(t)$, but the difference becomes rapidly negligible.
In particular, the ratio $P(t)/L(t)$ converges quickly to $1/2$;  we will consider this ratio further when we allow for extinction.

Importantly, for most phylogenetic trees, both $n$ and $t$ will be known from the data.  Do the observations on edge lengths made above also hold when we condition on both $n$ and $t$? The expected total length of a Yule tree conditional on it having grown for time $t$ and having exactly $n$ tips at time $t$ is given by:
\begin{equation}
\label{totaleq}
L_n(t) = t\cdot \left(2+ \frac{n-2}{x} (1-y(x))\right),
\end{equation}
where $x=\lambda t$ and $y(x): = \frac{xe^{-x}}{1-e^{-x}}$, which is a function that decreases from $1$ towards $0$ as $x \geq 0$ grows
(for details, see Steel and Mooers (2010)).
Let $I_n(t)$ and $P_n(t)$ denote the expected sum of the interior and pendant edge lengths (respectively) of a Yule tree,  conditional on it having  grown for time $t$ and having exactly $n$ tips at time $t$.
Thus, $I_n(t) + P_n(t) = L_n(t)$ (given by Eqn. (\ref{totaleq})).

A proof of the following result is provided in the Appendix.

\begin{theorem}
 \label{CorExpYule}
The expected length of a randomly picked pendant edge in a Yule tree on $n$ extant species and of age $t$ is,
$$\frac{1}{n}P_n(t) = t\cdot \left( \frac{2 }{n(n-1)} + \frac{(n-2) \left[(n+5) -4 (1+n +2x) e^{-x}+ (3n-1 +2(n+1) x )e^{-2x} \right]}{ 2x n(n-1)(1- e^{-x})^2}\right),$$
where $x=\lambda t$.

In particular, if we  set  $\lambda$ to its maximum likelihood estimate, i.e. $\lambda_{ML}=\log(\frac {n}{2}) / t$ \citep{magallon01}, then the ratio $\hat{R}_n:=P_n(t)/L_n(t)$ of the  expected total length of the pendant edges to the expected total length of all edges in a Yule tree on $n$ extant species and age $t$  is independent of $t$ and is given by:
$$\hat{R}_n = \frac{  n^3-3n^2 -4n \log(n/2) +4n -4}{ 2  (n-1)(n-2)^2},$$
which tends to $1/2$ as $n \rightarrow \infty$.
\end{theorem}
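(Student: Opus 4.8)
The plan is to first establish a closed form for $P_n(t)$, and then read off both displayed identities. Let $Z_s$ denote the (random) number of lineages present at time $s$, so $Z_0 = 2$ and $Z_t$ has the negative-binomial law $\PP(Z_t = n) = (n-1)e^{-2\lambda t}(1-e^{-\lambda t})^{n-2}$. The key observation is that the sum $P$ of the pendant-edge lengths satisfies $P = \int_0^t X_s\, ds$, where $X_s$ is the number of lineages alive at time $s$ that undergo no further branching before time $t$ (each such lineage is precisely a sub-segment of a pendant edge of the final tree), so $P_n(t) = \int_0^t \EE[X_s \mid Z_t = n]\, ds$. Conditioning also on $Z_s = m$, the $m$ subtrees grow as independent Yule processes whose tip-counts are i.i.d.\ geometric random variables conditioned on having sum $n$; a one-line computation with negative-binomial probabilities gives $\PP(\text{a given subtree has }1\text{ tip}\mid\text{sum}=n)=\tfrac{m-1}{n-1}$, hence $\EE[X_s \mid Z_s = m,\, Z_t = n] = \tfrac{m(m-1)}{n-1}$ and therefore
$$\frac1n P_n(t) \;=\; \frac{1}{n(n-1)}\int_0^t \EE\bigl[\,Z_s(Z_s-1)\mid Z_t = n\,\bigr]\, ds.$$

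The next step is to evaluate the conditional second factorial moment in closed form. Writing the joint law of $(Z_s, Z_t)$ as a product of two negative-binomial factors and dividing by $\PP(Z_t = n)$, the conditional distribution of $Z_s$ simplifies; after the substitutions $a = e^{-\lambda(t-s)}(1-e^{-\lambda s})$, $b = 1 - e^{-\lambda(t-s)}$ (note $a+b = 1-e^{-\lambda t}$) it becomes a weighted binomial sum, and expanding $m(m-1)^2$ in falling factorials and applying the identities $\sum_k k(k-1)\cdots(k-j+1)\binom{N}{k}a^k b^{N-k} = N(N-1)\cdots(N-j+1)\,a^j(a+b)^{N-j}$ with $N = n-1$ collapses everything to
$$\EE\bigl[\,Z_s(Z_s-1)\mid Z_t = n\,\bigr] \;=\; (n-2)(n-3)\,\rho(s)^2 + 4(n-2)\,\rho(s) + 2, \qquad \rho(s) := \frac{e^{-\lambda(t-s)}\bigl(1-e^{-\lambda s}\bigr)}{1-e^{-\lambda t}}.$$
This can be checked at $s=0$ (it gives $2 = 2\cdot 1$, since $Z_0 = 2$) and at $s=t$ (it gives $n(n-1)$). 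Writing $x = \lambda t$ and substituting $\rho(s) = \tfrac{e^{-x}}{1-e^{-x}}(e^{\lambda s}-1)$ turns the $s$-integral into elementary integrals of $1$, $e^{\lambda s}$ and $e^{2\lambda s}$ over $[0,t]$; collecting the constant, $e^{-x}$ and $e^{-2x}$ contributions and tidying up the polynomial factors in $n$ produces precisely the stated formula for $\tfrac1n P_n(t)$ (and subtracting it from Eqn.~(\ref{totaleq}) gives $I_n(t)$).

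For the second claim, observe that $\lambda_{ML} = \log(n/2)/t$ is equivalent to $e^{-\lambda t} = 2/n$, i.e.\ $e^{-x} = 2/n$, $1-e^{-x} = (n-2)/n$, $e^{-2x} = 4/n^2$ and $x = \log(n/2)$. Feeding these values into Eqn.~(\ref{totaleq}) (where $y(x) = 2x/(n-2)$ here) collapses it to $L_n(t) = t(n-2)/x$, and feeding them into the formula for $\tfrac1n P_n(t)$ just obtained and clearing denominators yields $P_n(t) = \frac{t\,(n^3 - 3n^2 - 4n\log(n/2) + 4n - 4)}{2x(n-1)(n-2)}$. In the ratio $\hat R_n = P_n(t)/L_n(t)$ the factors $t$ and $x$ cancel, leaving the claimed rational function; comparing leading terms ($n^3$ against $2n^3$) gives $\hat R_n \to 1/2$ as $n \to \infty$.

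The main obstacle I anticipate is the middle step: pushing the closed-form evaluation of $\EE[Z_s(Z_s-1)\mid Z_t=n]$ and then the $s$-integration through cleanly, and matching the outcome term-by-term against the rather opaque numerator $(n+5) - 4(1+n+2x)e^{-x} + (3n-1+2(n+1)x)e^{-2x}$ in the statement — this is where sign errors and binomial-identity slips are easiest to make. Everything else (the $P = \int_0^t X_s\, ds$ representation, the $\tfrac{m-1}{n-1}$ thinning probability, and the $\lambda_{ML}$ substitution for the ratio) is conceptually routine.
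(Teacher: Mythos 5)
Your proof is correct, but it takes a genuinely different route from the paper. The paper's proof is essentially a verification: it modifies the $\frac{dI(t)}{dt}=\lambda P(t)$ argument of Steel and Mooers to obtain, via conditioned transition probabilities $\PP[X_t=n-1\,|\,X_{t+\delta}=n]$, a recursive family of first-order linear ODEs in $t$ indexed by $n$ (with $P_{n-1}=L_{n-1}-I_{n-1}$ and $L_{n-1}$ known), and then checks that the stated expression satisfies this system with its boundary conditions, so that uniqueness of the solution completes the argument. You instead derive the formula constructively: you write $P=\int_0^t X_s\,ds$ with $X_s$ the lineages at time $s$ that do not split before $t$, use the thinning probability $\frac{m-1}{n-1}$ to get $\EE[X_s\,|\,Z_s=m,Z_t=n]=\frac{m(m-1)}{n-1}$, identify the conditional law of $Z_s$ given $Z_t=n$ as $2+\mathrm{Bin}(n-2,\rho(s))$ with $\rho(s)=e^{-\lambda(t-s)}(1-e^{-\lambda s})/(1-e^{-\lambda t})$, and integrate the resulting second factorial moment $(n-2)(n-3)\rho^2+4(n-2)\rho+2$. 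I checked the step you flagged as risky: that conditional binomial law and moment formula are correct, and carrying out the elementary integrals of $1$, $e^{\lambda s}$, $e^{2\lambda s}$ does reproduce the numerator $(n+5)-4(1+n+2x)e^{-x}+(3n-1+2(n+1)x)e^{-2x}$ exactly, and your ML substitution ($e^{-x}=2/n$, $L_n(t)=t(n-2)/x$) matches the paper's treatment of the second claim. What each approach buys: yours explains where the formula comes from, avoids the induction over $n$ and the appeal to uniqueness of the ODE solution, and yields an interpretable intermediate object (the conditional lineage-count distribution); the paper's ODE framework, though purely verificational here, is the machinery that the authors reuse for the interior-edge quantities and the birth-death extensions elsewhere in the paper.
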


Table 1 presents  $P_n(t$), $L_n(t)$ and their ratio $\hat{R}_n(t)$ (ie, $R_n(t)$ conditioned on  $\lambda$ taking its  maximum likelihood estimate)  for a range of tree sizes.  

\begin{table}
\begin{center}
  \begin{tabular}{|p{0.5in}|r|r|r|}
    \hline
    $n=$ & $P_n(t)$ & $L_n(t)$  & $\hat{R}_n(t)$\\
    \hline
    $4$ & $3.03296\cdot t$ & $2.8854\cdot t$ & $1.0511$\\
    $16$ & $3.8697\cdot t$ & $6.7326\cdot t$ & $0.5748$\\
    $64$ & $9.2373\cdot t$ & $17.8894\cdot t$ & $0.5163$\\
    $256$ & $26.3815\cdot t$ & $52.3492\cdot t$ & $0.5040$\\
    $1024$ & $82.0735\cdot t$ & $163.8260\cdot t$ & $0.5010$\\
    
    \hline
  \end{tabular}
   \caption{Sum of pendant  edges ($P_n(t)$), sum of all edges ($L_n(t)$) and their ratio ($\hat{R}_n(t)$)  for various tree sizes $n$ when both $n$ and $t$ are fixed and $\lambda$ is set to its maximum likelihood value. }\label{Ta:first}
\end{center}
\end{table}

\section{Extension to birth-death models}
Allowing for random extinction (as well as speciation) introduces additional complexity into the analyses presented above.   We first consider what happens if we condition just on $n$ (and adopt the assumption that the  time of origin of the initial linage is a parameter of the birth-death model).  To do this, we have to assume a prior distribution for the time of origin  when conditioning the trees to have $n$ extant species.  We make the common assumption that the first species originated at any time in the past with uniform probability \citep{AlPo2005}. This is also called an improper prior on $(0,\infty)$.  Conditioning the resulting tree to have $n$ extant species yields a proper distribution for the time of origin \citep{ger}.  Note that, under the Yule model where $\mu$ = 0, this scenario is equivalent to stopping the process just before the $n+1$-th speciation event \citep{hartmann10}, which is the setting we considered in the first two sections of this paper.  The following result generalizes those earlier findings to birth-death models (a proof is provided in the Appendix).  As usual, $\lambda$ is the per lineage speciation rate and $\mu$ is the per-lineage extinction rate.

\begin{theorem}
\label{super1}
The expected length of a pendant edge on a birth-death tree conditioned on n is, for $0<\mu<\lambda$,
\begin{equation}
\label{lamu}
\EE[p|n] =\frac{ \mu + (\lambda - \mu) \log(1-\mu / \lambda)}{\mu^2};
\end{equation}
for $\mu = \lambda$, we have:
$$\EE[p|n] = \frac{1}{\lambda};$$
and  for $\mu=0$, we have:
$$\EE[p|n] = \frac{1}{2\lambda}.$$
\end{theorem}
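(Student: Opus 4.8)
The plan is to reduce the statement to one improper-measure integral over past time and then read off the three cases. Write $r=\lambda-\mu$; let $p_0(\tau)$ be the probability that a birth-death lineage has no extant descendant after time $\tau$, and let $\beta(\tau)$ be the parameter of the geometric law of the number of extant descendants of a surviving lineage, so that $1-\beta(\tau)=\frac{r}{\lambda e^{r\tau}-\mu}$, $p_0(\tau)=\frac{\mu}{\lambda}\beta(\tau)$, and $q_1(\tau):=(1-p_0(\tau))(1-\beta(\tau))$ is the probability of exactly one extant descendant after time $\tau$. First I would use exchangeability of the $n$ extant species to replace $\EE[p\mid n]$ by $\tfrac1n\EE[P\mid n]$, where $P$ is the total length of all pendant edges, and then write $\EE[P\mid n]=\int_0^\infty\EE\big[\#\{\text{pendant edges of length}>s\}\mid N=n\big]\,ds$. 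The structural fact I would lean on is that a given extant tip has pendant edge longer than $s$ exactly when the lineage ancestral to it $s$ time units before the present has a unique extant descendant (an earlier reconstructed-tree branching would be a speciation whose sister clade survived). Hence $\#\{\text{pendant edges}>s\}$ equals the number of lineages alive at depth $s$ in the complete tree having exactly one extant descendant.

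Next I would compute that conditional expectation by splitting the tree at depth $s$ and using the improper uniform prior on the time of origin. Flatness of the prior gives the number $M_s$ of depth-$s$ lineages the improper law $\PP(M_s=m)\propto\int_0^\infty q_m(\tau)\,d\tau=\tfrac1{\lambda m}$ ($m\ge1$), a one-line substitution using $p_0=\tfrac{\mu}{\lambda}\beta$; and, given $M_s=m$, the numbers of extant descendants of these $m$ lineages are i.i.d.\ with the birth-death offspring generating function $G_s$. Since $1-G_s(z)=\frac{(1-p_0(s))(1-z)}{1-\beta(s)z}$, one gets $\sum_{j\ge0}[z^{\,n-1}]G_s(z)^j=[z^{\,n-1}]\frac1{1-G_s(z)}=\frac{1-\beta(s)}{1-p_0(s)}$ for every $n\ge2$. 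Putting in the factor $m$ for choosing which of the $M_s$ lineages carries the unique descendant and the weight $q_1(s)$, the sum over $m$ collapses (the $m$ cancels the $\tfrac1m$ in the law of $M_s$), and after dividing by the normalizer $\PP(N=n)\propto\int_0^\infty q_n(\tau)\,d\tau=\tfrac1{\lambda n}$ the factors $(1-p_0(s))$ cancel, leaving
$$\EE\big[\#\{\text{pendant edges}>s\}\mid N=n\big]=n\,(1-\beta(s))^2 ,$$
which after division by $n$ no longer depends on $n$.

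Consequently $\EE[p\mid n]=\int_0^\infty(1-\beta(s))^2\,ds=(\lambda-\mu)^2\int_0^\infty\frac{ds}{(\lambda e^{rs}-\mu)^2}$. A partial-fraction integration (substitute $w=e^{rs}$ and decompose $\frac1{w(\lambda w-\mu)^2}$) evaluates this to $\frac{\mu+(\lambda-\mu)\log(1-\mu/\lambda)}{\mu^2}$. The other two cases are the degenerate limits of the same integral: $\mu=0$ gives $1-\beta(s)=e^{-\lambda s}$ and hence $\int_0^\infty e^{-2\lambda s}\,ds=\tfrac1{2\lambda}$, while $\mu\to\lambda$ gives $1-\beta(s)=(1+\lambda s)^{-1}$ and hence $\int_0^\infty(1+\lambda s)^{-2}\,ds=\tfrac1\lambda$ (both are also visible directly in the general formula, via a Taylor expansion as $\mu\to0$ and via $(\lambda-\mu)\log(1-\mu/\lambda)\to0$ as $\mu\to\lambda$).

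The hard part, I expect, is the bookkeeping in the conditioning step rather than any individual calculation. One must resist computing the normalizer $\PP(N=n)$ from the same depth-$s$ decomposition used for the numerator: that decomposition silently drops the event that the origin of the tree is more recent than $s$, which contributes nothing to $\#\{\text{pendant edges}>s\}$ (in that case every pendant edge is shorter than $s$) but does contribute to $\PP(N=n)$. Handling the numerator via the depth-$s$ split while evaluating $\PP(N=n)$ separately as $\int_0^\infty q_n(\tau)\,d\tau$ is exactly what kills the spurious $s$-dependence and yields the clean, $n$-free answer. (Alternatively one could quote the explicit description of the conditioned reconstructed process in \cite{ger}, but the route sketched here is self-contained.)
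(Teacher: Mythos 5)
Your argument is correct, and it reaches the theorem by a genuinely different route than the paper. The paper first establishes a density result (its Lemma): the length of a random pendant edge has density $f_p(t\mid n)=2\lambda p_1(t)\,(1-\lambda p_0(t))$, obtained by combining Stadler's formula $v(k)=2k/(n(n-1))$ for the probability that a leaf attaches to the $k$-th speciation event with Gernhard's density $f_{n,k}(t)$ for the time of that event under the same uniform-prior-on-origin conditioning; the sum over $k$ collapses via the binomial theorem, the $n$-dependence cancels, and the theorem follows by evaluating $\int_0^\infty t\,f_p(t\mid n)\,dt$ (with $\mu=\lambda$ handled as a limit). You instead compute the survival function directly: the structural observation that a pendant edge exceeds $s$ exactly when the tip's ancestral lineage at depth $s$ has a unique extant descendant, combined with the depth-$s$ decomposition under the improper prior, the identity $\int_0^\infty q_m(\tau)\,d\tau=1/(\lambda m)$, and the generating-function cancellation, yields $\PP(p>s\mid n)=(1-\beta(s))^2$, whose negative derivative is precisely the paper's $f_p(t\mid n)$ — so the two arguments identify the same $n$-free distribution, and your tail integral and their mean integral are equivalent partial-fraction computations. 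What your route buys is self-containedness (no appeal to the published $f_{n,k}$ or $v(k)$) and a transparent mechanism for why the answer does not depend on $n$; what the paper's route buys is brevity given the cited results, plus an explicit density formula. Your closing caveat — that the normalizer $\PP(N=n)$ must be computed from $\int_0^\infty q_n(\tau)\,d\tau$ rather than from the depth-$s$ split, since origins younger than $s$ contribute to the denominator but not the numerator — is exactly where a careless version of this argument would fail, and you handle it correctly; likewise your treatment of $\mu=\lambda$ is fine, since the whole decomposition goes through verbatim with the critical-case $\beta(s)=\lambda s/(1+\lambda s)$.
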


We can also obtain exact results for the lengths of the edges in a birth-death tree
if we condition (just) on time. In particular, we can provide extensions to equation (\ref{tonly})to allow for extinction.  We begin, as
usual, with two lineages of length 0.
Let $T^R(t)$ denote the tree that is spanned by those taxa that are extant at time $t$; $T^R(t)$ is therefore referred to as the `reconstructed' birth-death tree (the tree consisting of edges that survive to time $t$ while extinct lineages are pruned away) \citep{nee94,  ger}. If there are no taxa extant at time $t$, we say that $T^R(t)$ is {\em empty}.
Let $N^R(t)$ denote the expected number of tips in the reconstructed birth-death tree, given by the well-known formula:
$$N^R(t) = 2e^{(\lambda-\mu)t}, t\geq 0.$$
Note that although $N^R(t)$ tends to infinity as $t$ grows when $\lambda> \mu$, it is quite possible that the actual number of lineages at time $t$ is $0$, in which case $T^R(t)$ is empty.
Let $L^R(t)$ be the expected total length of the reconstructed birth-death tree, and let
$P^R(t)$ be the expected sum of the pendant branch lengths of this tree.
The proof of the following result is provided in the Appendix.

\begin{theorem}
\label{birthdeath}
Consider a birth-death tree with speciation rate $\lambda>0$ and extinction rate $\mu$ that starts from two lineages of length $0$.
Let $\rho = \frac{\lambda}{\mu}$,
$r = \lambda - \mu$ and let $f_\rho(s) = \frac{\rho e^{s} - 1}{(\rho-1)e^{s}}.$
Then, for $t \geq 0$:
\begin{itemize}
\item[{\bf (i)}]
$L^R(t)=\frac{2 e^{rt}}{\mu} \cdot( \ln f_\rho(rt)).$
\item[{\bf (ii)}]
$P^R(t)=\frac{2e^{rt}}{\mu}  \left (1- (\rho -1)\cdot\left[(\ln f_\rho(rt))+\frac{1}{\rho e^{r t}-1}\right]\right).$
\item[{\bf (iii)}] For $\rho>1$,  the limiting ratio $\tau_\rho: = \lim_{t \rightarrow \infty} \frac{P^R(t)}{L^R(t)}$ is given by:
$$\tau_\rho  = \frac{1}{\ln \left[\frac{\rho}{\rho-1} \right]}-\rho+1.$$
\end{itemize}
\end{theorem}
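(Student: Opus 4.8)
\noindent\emph{Proof proposal.} The plan is to write both $L^R(t)$ and $P^R(t)$ as integrals over the time axis of an expected lineage count. For $0\le s\le t$, the number of edges of $T^R(t)$ that cross the time-level $s$ equals the number of lineages alive at time $s$ in the full birth-death process that leave at least one descendant alive at time $t$; integrating this count over $s\in[0,t]$ recovers the total reconstructed length. Similarly, a point at level $s$ lies on a \emph{pendant} edge of $T^R(t)$ exactly when the lineage carrying it leaves \emph{precisely one} surviving descendant at time $t$: if it left two or more, a reconstructed branching occurs strictly between $s$ and $t$, so the point sits on an interior edge. Thus $L^R(t)=\int_0^t \ell(s,t)\,ds$ and $P^R(t)=\int_0^t \pi(s,t)\,ds$, where $\ell(s,t)$ and $\pi(s,t)$ are the expected numbers of level-$s$ lineages with, respectively, at least one and exactly one descendant alive at time $t$.

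To evaluate these integrands I would invoke the branching property together with the classical formulas for a linear birth-death process started from a single lineage: writing $P_j(\tau)$ for the probability of exactly $j$ descendants after elapsed time $\tau$, one has $1-P_0(\tau)=re^{r\tau}/(\lambda e^{r\tau}-\mu)$ and $P_1(\tau)=r^2e^{r\tau}/(\lambda e^{r\tau}-\mu)^2$. Since the expected number of lineages alive at time $s$ is $2e^{rs}$, conditioning on the population at time $s$ and using that the subtended subtrees are independent gives $\ell(s,t)=2e^{rs}\bigl(1-P_0(t-s)\bigr)$ and $\pi(s,t)=2e^{rs}P_1(t-s)$; the possibility of total extinction (empty $T^R(t)$) is absorbed automatically into these expectations.

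The remaining work is calculus. Substituting $v=t-s$ and then $w=e^{rv}$ converts both integrals into rational ones; partial-fraction expansions of $1/\bigl(w(\lambda w-\mu)\bigr)$ and of $1/\bigl(w(\lambda w-\mu)^2\bigr)$ give elementary antiderivatives. Evaluating from $w=1$ to $w=e^{rt}$ yields $\int_0^t (\lambda e^{rv}-\mu)^{-1}\,dv=\frac{1}{r\mu}\ln f_\rho(rt)$ and a companion expression for $\int_0^t(\lambda e^{rv}-\mu)^{-2}\,dv$ carrying an extra term $\frac{1}{\mu r}-\frac{1}{\mu(\lambda e^{rt}-\mu)}$. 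Multiplying by the prefactors $2re^{rt}$ and $2r^2e^{rt}$ respectively, and rewriting with $\rho=\lambda/\mu$ (so that $r/\mu=\rho-1$ and $r/(\lambda e^{rt}-\mu)=(\rho-1)/(\rho e^{rt}-1)$), produces exactly the closed forms in (i) and (ii).

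For (iii), dividing (ii) by (i) cancels the common factor $2e^{rt}/\mu$ and leaves $\bigl(1-(\rho-1)[\ln f_\rho(rt)+(\rho e^{rt}-1)^{-1}]\bigr)/\ln f_\rho(rt)$. When $\rho>1$ we have $r>0$, hence $e^{rt}\to\infty$, so $f_\rho(rt)\to\rho/(\rho-1)$ and $(\rho e^{rt}-1)^{-1}\to 0$, which gives $\tau_\rho=1/\ln[\rho/(\rho-1)]-\rho+1$. The one step that needs genuine care is the geometric claim underlying the integral representation --- that ``crossing a pendant edge at level $s$'' coincides with ``having exactly one surviving descendant at time $t$'' --- including the bookkeeping at the initial bifurcation and for the degenerate one- and two-tip realisations (where one checks directly that the contributions to $P^R$ and $L^R$ match the formulas); everything after that is routine.
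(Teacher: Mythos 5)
Your proof is correct, and it takes a genuinely different route from the paper's. The paper obtains (i) by a first-event (backward) decomposition on a tree grown from a single lineage, yielding the linear ODE $\frac{dL^R(t)}{dt}=rL^R(t)+2/f_\rho(rt)$ with $L^R(0)=0$, and then gets (ii) with no further integration from a second, forward bookkeeping identity $\frac{dL^R(t)}{dt}=N^R(t)-\mu P^R(t)$ (an extant tip dies at rate $\mu$ and removes exactly its pendant edge), equating the two right-hand sides; part (iii) is the same limit computation in both arguments. You instead represent both quantities as level integrals, $L^R(t)=\int_0^t 2e^{rs}\bigl(1-P_0(t-s)\bigr)\,ds$ and $P^R(t)=\int_0^t 2e^{rs}P_1(t-s)\,ds$, based on the observation that a point at height $s$ belongs to $T^R(t)$ iff its lineage has at least one extant descendant at time $t$, and lies on a pendant edge iff it has exactly one. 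That identification (the step you rightly flag) is sound, including for segments reaching back to time $0$ and for the one- and two-tip realisations, and it agrees with the paper's pruning identity; the interchange of expectation and the $ds$-integral is Tonelli, and the Kendall formulas you quote for $1-P_0$ and $P_1$ are precisely the quantities $1/f_\rho(rt)$ and $p_1(t)$ the paper already uses in its proofs of Theorems 3 and 4. I checked that your substitutions and partial fractions reproduce (i) and (ii) exactly, e.g. $\int_0^t(\lambda e^{rv}-\mu)^{-1}\,dv=\frac{1}{r\mu}\ln f_\rho(rt)$ using $f_\rho(rt)=(\lambda-\mu e^{-rt})/r$. Comparing the two: the paper's ODE route is lighter on calculus and delivers $P^R$ for free once $L^R$ is known, though its two dynamical identities require some care to justify; your route costs two explicit integrals but is probabilistically transparent, obtains $P^R$ independently of $L^R$, and makes immediate the paper's closing observation that $P^R(t)/N^R(t)=\int_0^t e^{-rv}P_1(v)\,dv$ converges, as $t\rightarrow\infty$, to the expression for $\EE[p|n]$ in Theorem 3.
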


\begin{figure}[ht]
\begin{center}
\resizebox{11cm}{!}{
\includegraphics{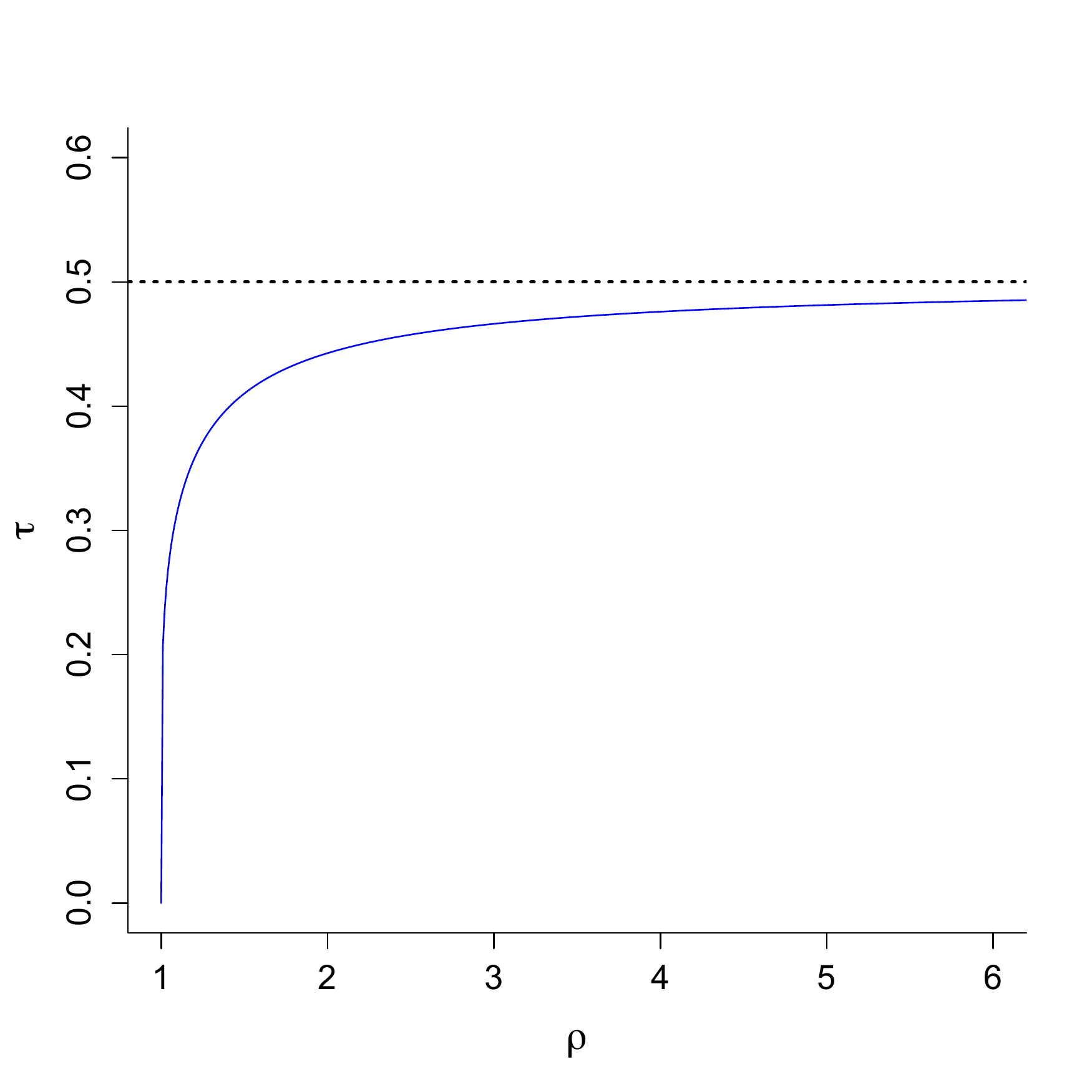}
}
\caption{Graph of $\tau_\rho$, which is the limiting ratio (for large $t$) of the sum of pendant edge lengths to the sum of all edge lengths in a birth-death tree, in which the speciation rate is $\rho>1$ times the extinction rate.}
\label{fig2}
\end{center}
\end{figure}

The function $\tau_\rho$ from part (iii) is shown in Fig. 2. Note that the 0.5 asymptote agrees with the ratio of $P^R(t)$ and $L^R(t)$ as in the pure-birth model as calculated earlier (i.e. $\tau_\rho \rightarrow \frac{1}{2}$ as
$\rho \rightarrow \infty$). Interestingly, the asymptote is reached fairly quickly on large trees. For example, from Fig. 2, we see that when the extinction rate is one-third of the extinction rate ($\rho$ = 3), then  $\tau_\rho$ = 0.47 and the expected pendant edge length is 87\%  the expected interior edge length.  Mild extinction in a uniform birth-death model does not produce particularly short pendant edges. At the other extreme, as the extinction rate approaches the speciation rate (so $r$ and $\rho$ converge to 0 and 1 respectively) $\tau_\rho$ can be easily shown to converge to $0$, as suggested by Fig. 2.
It is interesting to note that the expected sum of pendant edge lengths in the reconstructed tree at time $t$  (i.e. $P^R(t)$) divided by the expected number of extant taxa at time $t$ (i.e. $2e^{(\lambda - \mu)t}$) converges to the same expression as given in Eqn. (\ref{lamu}) as $t \rightarrow \infty$.

\section{Expected PD under simple Field-of-Bullets model for Yule trees}

The expected lengths of edges in a tree are directly relevant for quantifying the expected loss of `phylogenetic diversity' (PD) under simple models of extinction in which each tip is deleted with some fixed probability.  In these models, edges that are `deep' within the tree are more likely to contribute to the PD score of the surviving taxa than pendant edges of similar length, since they are more likely to have
at least one non-extinct taxon in the clade they support. This redundancy leads to the nonlinear decrease of PD as more species are removed from a tree \citep{nee97}.  However, the ratio of the lengths of pendant  to interior edges is also critical, as pendant edges will be the first to be deleted from the tree.   In this section, we analyse the expected PD score of a Yule tree under random taxon deletion. Note that there are {\em two} random processes at play here:  the Yule process that produces the tree, and then the extinction process that deletes taxa.

Consider then  a Yule tree that starts with a split into two lineages at time $0$ and is grown until time $t>0$. At that time, each tip is selected independently with probability $s$, and the remaining tips
are deleted (pruned). Thus $s$ is the `survival probability' of a taxon.   Let $\psi_t(s)$ be the $PD$ of the resulted pruned tree, and let
$\pi_t(s)= \EE[\psi_t(s)]$, where $\EE[.]$ denotes expectation with respect to the random Yule tree and the random pruning operation.
Thus, $\pi_t(1)$ is the expected $PD$ of the (entire) Yule tree, namely $L(t) =\frac{2}{\lambda}(e^{\lambda t} -1 )$ (Eqn.  \ref{tonly}).
 For $s<1$, $\pi_t(s)$ is the expected $PD$ one obtains by generating a
Yule tree until time $t$ and then applying a field-of-bullets pruning with survival probability $s$ for each tip.  The proof of the following result is provided in the Appendix.

 \begin{theorem}
 \label{best}
$$\pi_t(s)= \frac{2s}{(1-s)\lambda}e^{\lambda t} \cdot \left[ - \log\left(s + (1-s)e^{-\lambda t}\right)\right].$$
The ratio  $\pi_t(s)/\pi_t(1)$ of the expected PD in the pruned tree to the expected PD of the total tree therefore converges (quickly) with $t$ to the limit:
$$\pi(s):= \frac{-s\log(s)}{1-s}$$
\end{theorem}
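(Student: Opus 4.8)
The plan is to compute $\pi_t(s)$ by a linearity-of-expectation argument that slices the tree by height. Fix $u\in[0,t]$, let $M_u$ be the (random) number of lineages present at time $u$ in the Yule tree, and let $S_u$ be the number of those lineages that are ancestral to at least one sampled tip at time $t$. Then $S_u$ is exactly the number of edges of the pruned tree that cross height $u$, so the pruned tree's total length is $\psi_t(s)=\int_0^t S_u\,du$; exchanging expectation with the integral (Tonelli, everything nonnegative) gives $\pi_t(s)=\int_0^t \EE[S_u]\,du$. As a sanity check, at $s=1$ one has $S_u=M_u$ with $\EE[M_u]=2e^{\lambda u}$, recovering $\pi_t(1)=\int_0^t 2e^{\lambda u}\,du=\frac{2}{\lambda}(e^{\lambda t}-1)=L(t)$ of Eqn.~(\ref{tonly}).

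The next step is to evaluate $\EE[S_u]$. Conditional on $M_u=m$, the $m$ subtrees grown on the interval $[u,t]$ from the $m$ lineages present at time $u$ are independent, identically distributed Yule trees, and the field-of-bullets sampling acts independently across all of their tips; hence these $m$ lineages are independently ancestral to a sampled tip, each with a common probability $\phi(u)$, so $\EE[S_u\mid M_u]=M_u\,\phi(u)$ and $\EE[S_u]=2e^{\lambda u}\phi(u)$. To compute $\phi(u)$, recall that a Yule process started from a single lineage and run for time $\tau=t-u$ has $k\ge 1$ descendant lineages with probability $e^{-\lambda\tau}(1-e^{-\lambda\tau})^{k-1}$; given $k$ descendants, the probability that none of them is sampled is $(1-s)^k$, and summing the resulting geometric series yields
\[
\phi(u)=1-\sum_{k\ge 1}e^{-\lambda\tau}(1-e^{-\lambda\tau})^{k-1}(1-s)^k=\frac{s}{s+(1-s)e^{-\lambda(t-u)}}.
\]

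Substituting back, $\pi_t(s)=\int_0^t \frac{2se^{\lambda u}}{s+(1-s)e^{-\lambda(t-u)}}\,du$, and the successive changes of variable $v=t-u$ and then $w=e^{-\lambda v}$ turn this into $\frac{2se^{\lambda t}}{\lambda}\displaystyle\int_{e^{-\lambda t}}^{1}\frac{dw}{s+(1-s)w}$, an elementary logarithmic integral that evaluates to the claimed closed form $\frac{2s}{(1-s)\lambda}e^{\lambda t}\bigl[-\log\!\bigl(s+(1-s)e^{-\lambda t}\bigr)\bigr]$. For the limiting ratio, divide by $\pi_t(1)=L(t)=\frac{2}{\lambda}(e^{\lambda t}-1)$: as $t\to\infty$ the prefactor $e^{\lambda t}/(e^{\lambda t}-1)\to 1$ and $s+(1-s)e^{-\lambda t}\to s$, so $\pi_t(s)/\pi_t(1)\to -s\log(s)/(1-s)$, and the convergence is exponentially fast since the correction terms are $O(e^{-\lambda t})$.

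I expect the only genuinely delicate point to be the conditional-independence claim behind $\EE[S_u\mid M_u]=M_u\,\phi(u)$ — namely, that conditional on the tree up to time $u$, the descendant subtrees over $[u,t]$ from distinct time-$u$ lineages are independent, identically distributed, and the sampling is independent across all tips (the branching property of the Yule process plus independence of the pruning) — together with the routine but worth-stating Tonelli exchange used to write $\pi_t(s)=\int_0^t\EE[S_u]\,du$. Once those structural facts are in place, the remainder is just the geometric-series computation of $\phi(u)$ and two elementary substitutions.
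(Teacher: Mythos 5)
Your argument is correct, and it reaches the stated formula by a genuinely different route from the paper. The paper works at the root: it conditions on what the initial lineage does in the first $\delta$ of time, uses the geometric distribution of the number of tips to compute the expected contribution of that initial segment, and thereby derives the first-order linear ODE $\frac{d\pi'_t(s)}{dt}-\lambda\pi'_t(s)=\frac{s}{s+(1-s)e^{-\lambda t}}$, which it then solves. You instead slice the tree by height, write the pruned PD as $\int_0^t S_u\,du$ with $S_u$ the number of retained lineages crossing level $u$, exchange expectation and integral, and compute $\EE[S_u]=2e^{\lambda u}\phi(u)$ from the branching property at time $u$; note that for this step you only need the marginal probability $\phi(u)$ per lineage (linearity of expectation), so the independence across the $M_u$ subtrees, while true, is not actually load-bearing. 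The two proofs share the same key probabilistic ingredient --- the probability $\frac{s}{s+(1-s)e^{-\lambda(t-u)}}$ that a lineage at height $u$ leaves at least one sampled descendant, which is exactly the inhomogeneous term in the paper's ODE --- and your direct integral is in effect the variation-of-parameters solution of that ODE, obtained probabilistically rather than analytically. What your version buys is transparency (no ODE to set up or solve, and the $s=1$ sanity check recovering $L(t)$ falls out immediately); what the paper's dynamical formulation buys is that the identical master-equation template is reused verbatim for the birth-death extension leading to Eqn.~(\ref{mumu2}), though your slicing argument would also extend there by replacing $2e^{\lambda u}$ with the expected reconstructed lineage count and $\phi(u)$ with the corresponding survival probability. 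One small point worth making explicit: your $S_u$ counts every retained lineage back to the time-$0$ root, so you are computing PD with edges measured back to the initial bifurcation, which is precisely the convention in the paper's $\pi_t(s)=2\EE[\phi_t(s)]$ (the paper only flags the root-versus-MRCA distinction in the birth-death case, where it is absorbed in the $t\to\infty$ limit), so the conventions agree and the limiting ratio is unaffected.
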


Theorem~\ref{best} implies  that $\pi_t(s) \geq s \cdot  \pi_t(1)$ for all $t>0$.
Moreover, the limiting ratio  $\pi(s)$ is a continuous and concave, positive function that approaches $0$ as $s \rightarrow 0$ and approaches  $1$ as $s \rightarrow 1$ (see Fig. 3). For $s=0.5, \pi(s) =\log(2) = 0.69.$
The slope function $\pi'(s)$ approaches infinity as $s$ approaches $0$ from above and $\pi'(s)$ approaches $\frac{1}{2}$ as $s$ approaches $1$ from below. This latter result can be seen by considering that pendant edges are the first to be lost from a tree undergoing extinction; under the Yule model, the sum of the pendant edges constitutes 0.5 of the total PD (Theorem~\ref{tonly}).

\begin{figure}[ht]
\begin{center}
\resizebox{11cm}{!}{
\includegraphics{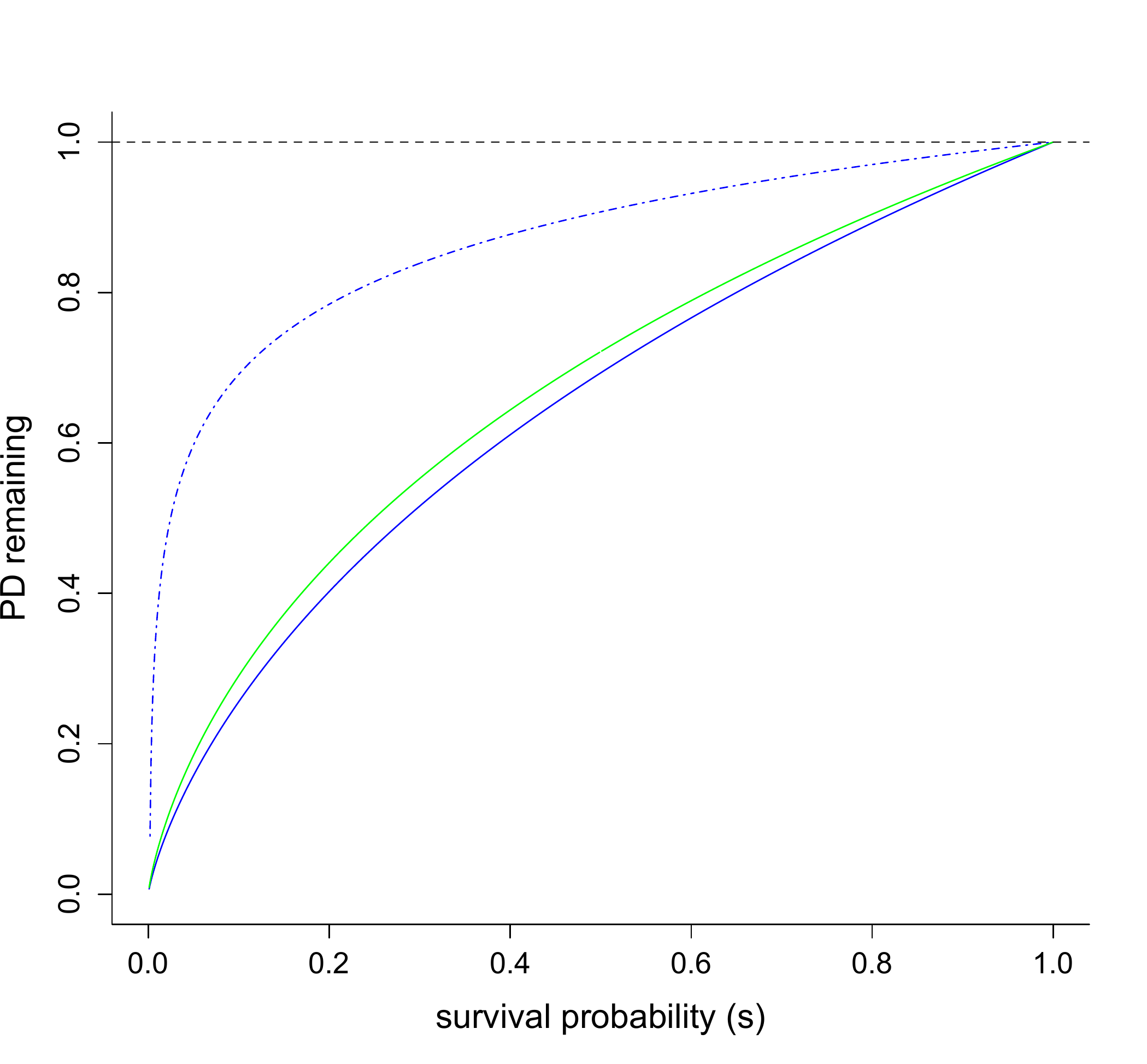}
}
\caption{Lower solid line shows the proportion of PD remaining when random extinction occurs with probability 1- $s$ on a Yule tree (from $\pi(s)$ from Theorem~\ref{best}). The dotted line shows the same quantity for the coalescent-style tree used by \citep{nee97}, for $n$=1000.  The curve in between these two shows the same quantity but on  a birth-death tree with $\mu = 0.5\lambda$, as described by Eqn. (\ref{mumu2}).}
\label{fig3}
\end{center}
\end{figure}

The high level of redundancy reported by  \cite{nee97} is due to their use of  coalescent-type models of tree shape with a constant population-size, where the pendant edges are expected to be much shorter than the interior edges.  More precisely, the ratio of the expected total length of the pendant edges to the expected total length of the interior edges converges to $0$ with increasing $n$, at a rate $1/\log(n)$, see e.g.  \citep{fuli} (Eqns. (10-12)).  An example of the relationship between $s$ and  the proportion of the tree remaining under Nee and May's model (for $n$ = 1000) is shown in Fig. 3.  

Under a Yule model, where interior and pendant edges have roughly the same expected length, the situation is quite different. If we take $s$=0.05, then $\pi(s)$ = 0.157. That is, in a large tree, if we lose 95\% of species (randomly) then we would expect to {\it lose} more than 84\% of the tree.  This lower level of redundancy is also more in line with statistical \citep{morlon11} and empirical estimates of tree loss under extinction regimes \citep{voneuler01,purvis00, vamosi08}, where tree shape and non-random extinction interact (see also \citep{heard00, nee05}).

Similar results hold with birth-death trees under mild extinction (see  Fig. 2), where the sum of the pendant edges constitutes $\tau_\rho$ of the total PD.   In particular, for
$\lambda>\mu >0$, the 
second formula presented in Theorem~\ref{best} can be modified as follows (see Appendix):
\begin{equation}
\label{mumu2}
\pi(s)=\frac{s}{(a-s)}\cdot\log \left(\frac{s}{a}\right)\cdot \frac{1-a}{\log(a)},
\end{equation}
 where $a= 1-\mu/\lambda$.
 
Fig. 3 exhibits  an example curve $\pi(s)$ on a birth-death tree constructed with $\mu$ = 0.5$\lambda$.

We note that this modified formula for $\mu>0$ should be used with care for larger values of $\mu$ for two reasons. Firstly, birth-death trees are increasingly likely to die out as $\mu$ approaches $\lambda$ and so an asymptotic  ratio of expected values such as $\pi(s)$ may be a poor estimate of expected PD loss in such situations. Note in particular that in the limit as $\mu/\lambda \rightarrow 1$, we have $\pi(s)=1$ for all $s>0$. This of course does not mean that if 99.9\% of the taxa are eliminated, then we would still expect to  retain 100\% of the phylogenetic diversity!

The second reason for caution is more empirically based.  In the extreme (critical) case where $\mu=\lambda$ then, as we have noted already, if we condition on a tree having $n$ extant leaves (assuming a uniform prior distribution for the time of the origin of the tree, as in  \citep{AlPo2005}), then the expected distribution of branch lengths in this tree would be precisely that given by the coalescent process \citep{gern08} that was used in the analysis by  \cite{nee97} . The problem now is that typical species-level phylogenetic trees look very different from such constant-size coalescent-shaped trees.  \cite{hey92}, using a sample of only eight trees,  was the first to point out that the coalescent model produced unreasonably short pendant edges (see also \cite{morlon11},  while McPeek's recent compilation \cite{mcpeek08} of 245 fairly-well sampled chordate, arthropod, mollusk, and magnoliophyte phylogenies, showed that these trees tended to have a branch length distribution in the opposite direction to the coalescent, with edges near the leaves tending to be, on average, slightly $longer$ than expected under the Yule model.   McPeek used the gamma statistic from \citep{pybus00} to describe the distribution of branch lengths as one moves from the root of the tree to the tips, and found that the majority of trees had negative gamma values, rather than having them centered on $0$ as expected under the Yule model \citep{pybus00} and the positive values expected under the coalescent \citep{pybus02}. Indeed, we show in the appendix that the expected value of gamma for a coalescent tree of increases indefinitely at a rate of  $\sqrt{3n}$.

\cite{morlon10} applied a coalescent framework that allows for incomplete taxon sampling to an overlapping set of 289 trees and found that the majority of trees ($>80\%$) had splitting times that were either consistent with the Yule model or  concentrated nearer the root.  Though nonrandom sampling may be a concern \citep{cusimano10}, the observation that most nearly-complete phylogenetic trees have gamma values close to zero (or negative), as well as the explicit test of the Yule model by \cite{morlon10}   suggest that our use of this model in analyzing expected loss of PD may be conservative.

\section{Conclusion}

Although the Yule model of diversification is nearly 100 years old, it still holds some surprises.  The fact that real  trees are conditioned on $t$ and that we show up at some random time after $n$ tips have been produced  leads to the observation that average pendant edge lengths (species ages) and internal edge lengths (those that anchor higher clades) are expected to be nearly equal under the Yule model.  Although all edges are not expected to be the same length  -- for instance the two edges incident to the root are longer than others (results not shown) -- this conditioning  also makes randomly selected edge lengths one half of  the naive expectation. These observations may be useful in informing prior distributions on edge lengths for tree inference.

Mild amounts of uniform extinction do not change these general observations.  Indeed, the `push of the past'  \citep{harvey94, phillimore08}, which describes the expectation that those groups which diversified faster than expected early on are more likely to be sampled in the present, would lead to internal edges being even shorter relative to pendant edges.  Non-uniform models, such as adaptive radiations where diversification actually slows down through time \citep{rabosky08, morlon10}, would do the same.  All these processes work against the redundancy inherent in the Tree of Life. We predict that this redundancy may not be as great as hoped for.  Of course, this prediction must await more complete, dated trees.

\section{Funding}
This work was funded by the Royal Society of New Zealand James Cook Fellowship and Marsden Funds (MS and LH),  the PhyloSpace project (ANR - Programme la 6\`{e}me Extinction, OG), and the Natural Sciences and Engineering Research Council of Canada Discovery Grants programme (AOM).

\section{Acknowledgements}
We thank our funders for support, and two anonymous reviewers, various audiences and the Associate Editor C\'{e}cile An\'{e} for very helpful comments.
 
 \newpage
 
\bibliographystyle{sysbio}
\bibliography{Arne_et_al}

\newpage

\section{Appendix: Proofs of Theorems}
\subsection{Proof of Theorem \ref{CorExpYule}}
\label{theoremtwoproof}

We can modify  the argument that leads to  the differential equation $\frac{dI(t)}{dt} =\lambda P(t)$ from \citep{steel10}  so as to take into account conditioning on $n$  as well as $t$ -- the analysis consists of calculating
quantities such as $\PP[X_t=n-1|X_{t+\delta}=n]$, where $X_t$ denotes the number of species present at time $t$, for which Eqn. (4) of \citep{nee01} is helpful.
In this way one can  derive the following sequence of first-order linear differential equations for $I_n=I_n(t)$:
\begin{equation}
\label{bigeq}
\frac{dI_n}{dt} + \frac{\lambda(n-2)}{1-e^{-\lambda t}}\cdot  I_n =  \frac{\lambda(n-2)}{1-e^{-\lambda t}}\cdot \left (I_{n-1}+\frac{1}{n-1}P_{n-1}\right).
\end{equation}

Notice that the term $P_{n-1} = P_{n-1}(t)$ on the right-hand side of (\ref{bigeq}) can be replaced by $L_{n-1}(t) - I_{n-1}(t)$ (with $L_{n-1}(t)$ given by (\ref{totaleq})).   Moreover, when $n=2$ we have the initial solution
$I_2(t) =0$ (and $P_2(t)=2t$) for all $t \geq 0$, and for each $n$ we have the boundary condition $I_n(t) =0$  at $t=0$.  

It can now be verified that the expression given in Theorem \ref{CorExpYule} for $P_n(t)$ satisfies this system of linear differential equations subject to the boundary condition, and so is the unique solution.  

For the second claim if we set  $\lambda$ to its maximum likelihood estimate, i.e. $\lambda_{ML}=\log(\frac {n}{2}) / t$, then,

{\footnotesize
\begin{eqnarray*}
P_n(t) &=& \frac{2 \log(n/2) }{\lambda (n-1)}+\\
& & \frac{(n-2) \left[(n+5) -4 (1+n +2\log(n/2)) e^{-\log(n/2)}+ (3n-1 +2(n+1) \log(n/2) )e^{-2 \log(n/2)} \right]}{ 2 \lambda (n-1)(1- e^{-\log(n/2)})^2}\\
&=& \frac{2 \log(n/2) }{\lambda (n-1)}+\frac{(n-2) \left[(n+5) -4 (1+n +2\log(n/2)) 2n^{-1}+ (3n-1 +2(n+1) \log(n/2) )  (n/2)^{-2}  \right]}{ 2 \lambda (n-1)(1-2/n)^2}\\
%&=&\frac{2 \log(n/2) }{\lambda (n-1)}+  \frac{(n-2) \left[n^2(n+5) -8n (1+n +2\log(n/2)) + 4(3n-1 +2(n+1) \log(n/2) )   \right]}{ 2 \lambda (n-1)(n-2)^2}\\
&=& t \frac{ n^3-3n^2 -4n\log(n/2) +4n -4}{ 2 \log(n/2) (n-1)(n-2)}\\
\end{eqnarray*}
}

The sum of all edge lengths is in expectation \citep{steel10},
$L_n(t)= t \frac{n-2}{\log(n/2)},$
and therefore the ratio $R_n$ is 
the expression given Theorem~\ref{CorExpYule}. From this expression it is easily seen that
 $\lim_{n\rightarrow \infty} \frac{P_n(t)}{L_n(t)}= 1/2$.
\hfill$\Box$

\bigskip

\subsection{Proof of Theorem~\ref{super1}}
\label{theoremthreeproof}

The probability $v(k)$ that a leaf is attached to the $k$th speciation event  in a tree on $n$ extant species under the Yule or birth-death model is, from  \cite{Sta}, given by: \begin{eqnarray}
v(k) = \frac{2k}{n(n-1)}. \label{pk}
\end{eqnarray}
\noindent
For $ 0 \leq \mu < \lambda$, let:
$$p_0(t):= \frac{(1-e^{-(\lambda-\mu)t})}{\lambda-\mu e^{-(\lambda-\mu)t}},\mbox{ and } 
p_1(t):= \frac{ (\lambda-\mu)^2 e^{-(\lambda-\mu)t}}{(\lambda - \mu e^{-(\lambda-\mu)t})^2}, $$
while for $\mu = \lambda$, let:
$$
p_0(t) := \frac{ t}{1+\lambda t}, \mbox{ and }  p_1(t) :=  \frac{1}{(1+\lambda t)^2}.$$
The probability that a lineage produces $0$ (resp. $1$) offspring after time $t$ is $\mu p_0(t)$ (resp. $p_1(t)$) \cite{Kendall1949}.
We first establish the following result:
\begin{lemma} \label{Thmpendn}
The length of a randomly picked pendant edge in a birth-death tree on $n$ extant species has probability density function 
$f_p(t|n) = 2 \lambda  p_1(t) (1-\lambda p_0(t)).$
\end{lemma}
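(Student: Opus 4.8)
The plan is to compute the density of the length of a randomly picked pendant edge by conditioning on which speciation event the pendant edge's parent lineage is attached to. By equation (\ref{pk}), the probability that a given pendant edge attaches at the $k$th speciation event is $v(k) = 2k/(n(n-1))$. Given that the parent lineage is created at the $k$th speciation event, I would argue that the age (length) of the pendant edge is distributed as the time $t$ such that, starting from that lineage, after time $t$ the lineage has produced exactly one surviving descendant (so it hasn't split again within the tree, and hasn't gone extinct) \emph{and} the remaining $n-k-1$ surviving lineages are produced by the rest of the tree. The key simplification I expect: because of the structure of the birth-death process and the sampling-consistency encoded in $v(k)$, after summing over $k$ the dependence on $k$ collapses. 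Concretely, $\sum_{k} v(k) \cdot (\text{something})$ will telescope because $\sum_k 2k/(n(n-1))$ averages against a quantity that is effectively $k$-independent once one normalizes correctly.

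First I would set up the per-lineage quantities from \cite{Kendall1949}: a lineage alive at some point has probability $\mu p_0(t)$ of having $0$ surviving offspring after further time $t$, and probability $p_1(t)$ of having exactly $1$ surviving offspring after time $t$ (with $p_0, p_1$ as defined just above the lemma). The pendant edge of length $t$ corresponds to a lineage that survives for time $t$ without speciating \emph{into two surviving sublineages}; the natural event is ``exactly one surviving descendant after time $t$,'' whose instantaneous rate of terminating (the surviving descendant being the tip itself, with the ``present'' reached) needs to be extracted. I would write the density as proportional to $\frac{d}{dt}$ of an appropriate survival-type expression, or more directly recognize $2\lambda p_1(t)(1-\lambda p_0(t))$ as arising from: rate $\lambda$ of the originating speciation event (factor absorbed into normalization), the factor $p_1(t)$ that the pendant lineage has exactly one descendant at the sampling time, and the factor $(1-\lambda p_0(t))$ accounting for the sister side — i.e.\ the probability that the partner lineage created at the same speciation node does \emph{not} fail in a way that would make the edge non-pendant. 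The factor $2$ comes from the two ways the pendant edge can sit relative to its sister subtree (or equivalently from $\sum_k v(k)$ normalization against the reconstructed-tree node distribution).

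The main obstacle, I expect, is justifying rigorously that the $k$-dependence cancels — that is, showing the length distribution of a pendant edge attached at the $k$th event, weighted by $v(k)$ and summed over $k$, yields exactly $2\lambda p_1(t)(1-\lambda p_0(t))$ with no residual $n$- or $k$-dependence. This requires a careful use of the exchangeability / sampling-consistency of birth-death trees (the reconstructed process viewed forward in time, as in \cite{ger, Sta}), and pinning down precisely which lineages must avoid extinction and which must avoid further speciation. I would handle this by working with the reconstructed tree directly: condition on $n$ extant species, use \eqref{pk} for the attachment point, and use the fact that, conditional on the attachment time, the forward evolution of the pendant lineage is governed by the $p_0, p_1$ formulae independently of the rest, with the ``rest of the tree produces the remaining $n-1$ tips'' event contributing only a normalizing constant. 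Once the density is shown to integrate to $1$ (a good consistency check I would perform), the lemma follows; the subsequent Theorem~\ref{super1} is then obtained by integrating $t\,f_p(t|n)\,dt$ over $(0,\infty)$ and simplifying the resulting logarithmic integral.
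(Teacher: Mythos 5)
Your high-level plan is the same as the paper's: write $f_p(t|n)=\sum_{k=1}^{n-1} v(k)\,f_{n,k}(t)$, where $v(k)=2k/(n(n-1))$ is the probability that the leaf attaches to the $k$th speciation event. But the heart of the proof is evaluating that mixture, and this is exactly the step you leave as an expectation that ``the $k$-dependence collapses'' and the sum ``telescopes.'' The paper closes this by quoting the explicit density $f_{n,k}(t)$ of the time of the $k$-th speciation event in the reconstructed tree conditioned on $n$ tips (Gernhard's conditioned reconstructed process, which is where the uniform/improper prior on the time of origin enters --- without that prior the law of the speciation times given $n$ is not even well defined, a point your proposal never addresses), and then the weighted sum becomes a binomial sum: the factor $k\binom{n}{k+1}$ from $v(k)f_{n,k}(t)$ reduces to $\binom{n-2}{k-1}$, and $\sum_k\binom{n-2}{k-1}z^{k-1}=(1+z)^{n-2}$ with $z=\frac{(\lambda-\mu)e^{-(\lambda-\mu)t}}{\lambda(1-e^{-(\lambda-\mu)t})}$ cancels all $n$-dependence, yielding $2\lambda(\lambda-\mu)^3 e^{-2(\lambda-\mu)t}/(\lambda-\mu e^{-(\lambda-\mu)t})^3 = 2\lambda p_1(t)(1-\lambda p_0(t))$. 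Nothing in your write-up substitutes for this computation.

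Your alternative of ``recognizing'' the formula factor by factor is not a proof and, as stated, the interpretation is shaky: $p_1(t)$ as the probability that the pendant lineage has exactly one surviving descendant is fine, but attributing $(1-\lambda p_0(t))$ to ``the sister lineage not failing'' and the factor $2$ to ``two ways the edge can sit relative to its sister subtree'' is asserted, not derived (note $p_1(t)=(1-\mu p_0(t))(1-\lambda p_0(t))$, so these factors intertwine with the geometric offspring distribution of Kendall's process rather than splitting cleanly across ``pendant side'' and ``sister side''). Likewise the claim that the event ``the rest of the tree produces the remaining $n-1$ tips'' contributes only a normalizing constant is precisely what needs proof; it is true only after the explicit $f_{n,k}$ computation (or an equivalent point-process/order-statistics argument for the conditioned reconstructed tree), not by exchangeability alone. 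Your consistency check that the proposed density integrates to $1$ is a useful sanity check but cannot identify the density. Finally, the $\mu=\lambda$ case still needs to be handled, e.g.\ as in the paper by taking the limit $\mu\rightarrow\lambda$ of the closed form; the subsequent integration $\int_0^\infty t f_p(t|n)\,dt$ for Theorem~\ref{super1} is fine once the lemma is actually established.
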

\begin{proof}
For proving the lemma, we will use the probability density of the time of the $k$-th speciation event in a birth-death tree with $n$ extant species which is derived in \cite{ger}, and for $\mu<\lambda$, we get,
\begin{eqnarray}
f_{n,k}(t) = (k+1){n \choose k+1} \lambda^{n-k} (\lambda-\mu)^{k+2}
 e^{-(\lambda-\mu)(k+1)t} 
 \frac{\left( 1- e^{-(\lambda-\mu)t}\right)^{n-k-1}}{(\lambda-\mu  e^{-(\lambda-\mu)t})^{n+1}}. \label{fnks}
\end{eqnarray}
Using Equation (\ref{pk}) and (\ref{fnks}), we can write,
\begin{eqnarray*}
f_p(t|n) &=& \sum_{k=1}^{n-1} v(k) f_{n,k}(t)\\
&=& 2 \sum_{k=1}^{n-1} {n-2 \choose k-1}  \lambda^{n-k} (\lambda-\mu)^{k+2}
 e^{-(\lambda-\mu)(k+1)t} 
 \frac{\left( 1- e^{-(\lambda-\mu)t}\right)^{n-k-1}}{(\lambda-\mu  e^{-(\lambda-\mu)t})^{n+1}}\\
 &=& 2 \lambda^{n-1} (\lambda-\mu)^3  e^{-2(\lambda-\mu)t}  \frac{\left( 1- e^{-(\lambda-\mu)t}\right)^{n-2}}{(\lambda-\mu  e^{-(\lambda-\mu)t})^{n+1}}
  \sum_{k=1}^{n-1} {n-2 \choose k-1}  \left( \frac{(\lambda-\mu) e^{-(\lambda-\mu)t} }{\lambda(1- e^{-(\lambda-\mu)t})}   \right)^{k-1}\\
 %&=& 2 \lambda^{n-1} (\lambda-\mu)^3  e^{-2(\lambda-\mu)t}  \frac{\left( 1- e^{-(\lambda-\mu)t}\right)^{n-2}}{(\lambda-\mu  e^{-(\lambda-\mu)t})^{n+1}}
   % \left(1+ \frac{(\lambda-\mu) e^{-(\lambda-\mu)t} }{\lambda(1- e^{-(\lambda-\mu)t})}   \right)^{n-2}\\
    &=&  2 \lambda (\lambda-\mu)^3 \frac{ e^{-2(\lambda-\mu)t}}{(\lambda - \mu e^{-(\lambda-\mu)t})^3}.
 \end{eqnarray*}
 For $\mu=\lambda$, we take the limit $\mu \rightarrow \lambda$ (using the property $e^{-\epsilon} \sim 1-\epsilon$), which establishes the lemma.
\end{proof}
Note that the length of a pendant edge is independent of $n$.
Theorem \ref{super1}  now  follows directly from Lemma \ref{Thmpendn} by evaluating $\int_0^\infty t f_p(t|n) dt$.
\hfill$\Box$

\bigskip

\subsection{Proof of Theorem~\ref{birthdeath}}
\label{theoremfourproof}

 The quantity  $\frac{1}{f_\rho(rt)} = \frac{r}{\lambda - \mu e^{-rt}}$ is the probability that a birth-death tree that starts with a single lineage at time $0$ has at least one extant lineage at time $t$ (Eqn. (2) of \citep{nee94}). 
Thus, by considering the first $\delta$ period of time in a birth-death tree that begins with a single lineage, the expected total sum $S(t)$ of branch lengths spanning the leaves
present at time $t$ satisfies the differential expression:

$$S(t+\delta) = 0 \cdot \mu\delta + 2S(t) \cdot \lambda \delta + (S(t)+\delta \frac{1}{f_\rho(rt)})\cdot (1-(\mu+\lambda) \delta) + O(\delta^2),$$

%$$S(t+\delta) = \delta \frac{1}{f_\rho(rt)}  + (1-\lambda\delta - \mu \delta) S(t) + 2\lambda \delta S(t) - \mu\delta S(t) + O(\delta^2),$$
(by considering whether or not the lineage becomes extinct, speciates, or persists unchanged within this initial $\delta$ period).  
Since $L^R(t) = 2 S(t)$ this leads to the following differential equation:
\begin{equation}
\label{thomas1}
\frac{dL^R(t)}{dt} = rL^R(t) + 2/f_\rho(rt).
\end{equation}
Solving Eqn. (\ref{thomas1}) subject to $L^R(0)=0$, gives part (i) of the Theorem.
By considering the evolution of the tree from time $t$ to $t+\delta$ a straightforward dynamical argument leads to a second differential equation that links $L^R(t)$ to $P^R(t)$ :
\begin{equation}
\label{thomas2}
\frac{dL^R(t)}{dt} = N^R(t) - \mu P^R(t).
\end{equation}
Part (ii) follows by equating the right-hand sides of Eqns. (\ref{thomas1}) and (\ref{thomas2}) to express $P^R(t)$ in terms of quantities already determined.
For Part (iii), observe that $r>0$ and $f_\rho(rt) \rightarrow (\rho-1)/\rho$ as $t \rightarrow \infty$ and so, from parts (i), (ii), we have the asymptotic equivalences
$L^R(t)/2e^{rt} \sim \mu^{-1}\ln[\rho/(\rho-1)], P^R(t)/2e^{rt} \sim \mu^{-1}(1- (\rho-1) \ln[\rho/(\rho-1)])$. Taking the ratio of these quantities gives the result claimed.

\bigskip

\subsection{Proof of Theorem~\ref{best}}
\label{theoremfiveproof}

Let $\phi_t= \phi_t(s)$ be the analogue of $\psi_t(s)$ if we start the Yule tree with a single (rather than 2) lineages at time $t=0$;  thus,
\begin{equation}
\label{phieq}
\pi_t(s) = \EE[\psi_t(s)] =2 \EE[\phi_t(s)],
\end{equation}
 (the behaviour of $\phi$ is slightly easier to analyse than $\psi$).
Let $X_t$ denote the number of tips in the Yule tree (starting with a single lineage at time 0) at time $t$.  Consider $\phi_{t+\delta}$, for a small value $\delta>0$.  In the first $\delta$ period of time the initial lineage can either  (i) speciate (with probability $\lambda \delta + O(\delta^2)$)  or (ii) fail to speciate (with probability $1-\lambda\delta + O(\delta^2)$) and so  we have:
\begin{equation}
\label{caseseq}
\phi_{t+\delta}= \begin{cases}
\phi_t^1 + \phi_t^2 + O(\delta), &\text{ with probability } \lambda \delta + O(\delta^2);\\
\phi_t ^0+ Y_t,  &\text{ with probability } 1-\lambda \delta + O(\delta^2);
\end{cases}
\end{equation}
where
$$\EE[Y_t|X_{t+\delta}=n] = \delta \cdot (1- (1-s)^n),$$
and $\phi_t^0$, $\phi_t^1$ and $\phi_t^2$ are independent random variables having the same distribution as $\phi_t$ (the contribution of $\delta$ to the PD score of the tree applies precisely if at least one of the
tips at time $t+\delta$ is sampled, and this event, conditional on $X_{t+\delta}=n$, has probability $1- (1-s)^{n}$).
Now, $$\PP(X_{t+\delta}=n|X_\delta = 1) = \PP(X_t=n|X_0=1),$$ and it is a classic result that this latter probability has a geometric distribution with mean $e^{\lambda t}$  (see e.g.  \cite{bei},  Example 6.10, pp. 193) and so:
\begin{equation}
\label{expeq1}
\EE[Y_t] = \delta \cdot (1-\EE[(1-s)^{X_t}]) = \delta \cdot \left(1-\sum_{n \geq 1} (1-s)^n e^{-\lambda t}(1-e^{-\lambda t})^{n-1}\right) = \frac{\delta \cdot s}{s+qe^{-\lambda t}}
\end{equation}
where $q=1-s$.  Let $\pi'_t(s) : = \EE[\phi_t(s)]$.  Taking expectation of (\ref{caseseq})  (with respect to both the Yule tree and the random sampling process) and applying (\ref{expeq1})  leads to the
 following differential relationship for  $\pi'_t(s)$:
 $$\pi'_{t+\delta}(s) = 2\lambda\delta\cdot\pi'_t(s) + (1-\lambda\delta) \cdot \left( \pi'_t(s) + \frac{\delta\cdot s}{s+qe^{-\lambda t}}\right) + O(\delta^2).$$
 This leads to the following first-order, linear differential equation for $\pi_t'(s)$:
$$\frac{d\pi'_t(s)}{dt}-\lambda \pi'_t(s) = \frac{s}{s+qe^{-\lambda t}}. $$
Solving this equation gives $\pi'_t(s)$, and thereby the stated value for $\mu_t(s) = 2\pi'_t(s)$ (by (\ref{phieq})).

The modification of this result to give Eqn. (\ref{mumu2}) in the birth-death setting, with $0<\mu < \lambda$ following a similar case analysis (but allowing for the possibility of extinction) leads to the differential equation for $M_t(s) = \EE[\phi_t(s)]$:
\begin{equation}
\label{linearde}
\frac{dM_t(s)}{dt}=(\lambda-\mu)M_t(s)+\PP(\phi_t(s)\neq0).
\end{equation}
Now,  by Eqn. (1) of  \citep{yang97} (or see \citep{nee94}) we have:
\begin{equation}
\label{expeq}
 \PP(\phi_t(s) \neq 0) =\frac{as}{s-(s-a)e^{-(\lambda-\mu)t}},
\end{equation}
where $a=1-\mu/\lambda$.  
Now $\pi_t(s)$ lies between $2M_t(s)$ and $2M_t(s)-t$ (depending on whether we add the lengths of all the edges from the extant taxa to the root, or just the edges  from the extant taxa to their most recent common ancestor), from which Eqn. (\ref{mumu2}) follows by evaluating the limit  of the ratio $\pi_t(s)/\pi_t(1)$ as $t \rightarrow \infty$.

\hfill $\Box$.

\bigskip

\subsection{The expected value of  $gamma$  under the coalescent process}
\label{theoremsixproof}

Under a Yule (pure-birth) model, the gamma statistic has a standard normal distribution with mean 0, while under a coalescent model it is positive.   Under the coalescent model,  the original $\gamma$ statistic grows at the asymptotic rate of $\sqrt{n}$ as the number of tips $n$ grows.

{\bf Theorem 6.} 
\indent {\em For a coalescent tree with $n$ leaves, $\gamma/\sqrt{n}$ converges in probability to  $\sqrt{3}$ with increasing $n$.}

For a rooted binary tree with $n\geq 2$ leaves, let $g_2, g_3, \ldots, g_n$ be times between successive speciation events, 
measured from the root to the leaves, and let
$T_n = \sum_{j=2}^n jg_j$. 
From \cite{pybus00} we have $\gamma = \frac{X_n}{Y_n},$ where $X_n$ can be written in the form:
$$X_n = \frac{1}{n-2} \sum_{i=2}^{n}\alpha_i g_i, \mbox{ where } \alpha_i = i(n/2) - 2\binom{i}{2},$$
and
$$Y_n = T_n \sqrt{\frac{1}{12(n-2)}}.$$
Now,  under the coalescent, the random variables $g_2, \ldots, g_{n}$ are independently distributed, and with $g_j$ having an exponential distribution with mean $\frac{1}{\binom{j}{2}}$.   It follows that $\frac{T_n}{2 \log(n)}$ and $\frac{X_n}{\log(n)}$ have expected values that converge to $1$, and variances that converge to $0$ as $n \rightarrow \infty$,  and so
$\frac{T_n}{2 \log(n)}$ and $\frac{X_n}{\log(n)}$ each converge in probability to the constant  $1$ as $n \rightarrow \infty$.  Consequently, the ratio 
$X_n/T_n$ converges in probability to 1/2 as $n \rightarrow \infty$, and so $\gamma(n)/\sqrt{n} = \frac{X_n}{T_n}\cdot  \frac{\sqrt{12(n-2)}}{\sqrt{n}}$ converges in probability to 
$\sqrt{3}$, as claimed.

Finally, a more careful asymptotic analysis provides a closer approximation to $\gamma/\sqrt{n}$ by the formula $\sqrt{3}\cdot (1-\frac{2}{\log_e(n)+C})$ where
$C$ is Euler's constant (0.5772...), and simulations confirm this improved fit.

\hfill $\Box$

\newpage

%{\bf Tables}

\end{document}